\newtheorem{definition}{Definition}[section]
\newtheorem{theorem}{Theorem}[section]
\newtheorem{con}{Construction}[section]
\date{\today}
\begin{document}

\toappear{Permission to make digital or hard copies of all or part of this work for personal or classroom use is granted without fee provided that copies are
not made or distributed for profit or commercial advantage and that copies
bear this notice and the full citation on the first page. To copy otherwise, to
republish, to post on servers or to redistribute to lists, requires prior specific permission and/or a fee.\\
This is a preprint version of our paper presented in SAC'14, March 24-28, 2014, Gyeongju, Korea.\\}

\title{Building Secure and Anonymous Communication Channel: Formal Model and its Prototype Implementation}

\author{
\numberofauthors{2}
% 1st. author
\alignauthor
Keita Emura\\
       \affaddr{National Institute of Information and Communications Technology, Japan}\\
       \email{k-emura@nict.go.jp}
% 2nd. author
\alignauthor
Akira Kanaoka\\
       \affaddr{National Institute of Information and Communications Technology, and \\Toho University, Japan}\\
       \email{akira.kanaoka@is.sci.toho-u.ac.jp}
% 3rd. author
\and 
\alignauthor 
Satoshi Ohta\\
       \affaddr{National Institute of Information and Communications Technology, Japan}\\
       \email{sota@nict.go.jp}
%\and  % use '\and' if you need 'another row' of author names
% 4th. author
\alignauthor Takeshi Takahashi\\
       \affaddr{National Institute of Information and Communications Technology, Japan}\\
       \email{takeshi\_takahashi@ieee.org}
}

\maketitle
\begin{abstract}
Various techniques need to be combined to realize anonymously authenticated communication.
Cryptographic tools enable anonymous user authentication while anonymous communication protocols hide users' IP addresses from service providers.
One simple approach for realizing anonymously authenticated communication is their simple combination, but this gives rise to another issue; how to build a secure channel.
The current public key infrastructure cannot be used since the user's public key identifies the user.
To cope with this issue, we propose a protocol that uses identity-based encryption for packet encryption without sacrificing anonymity, and group signature for anonymous user authentication.
Communications in the protocol take place through proxy entities that conceal users' IP addresses from service providers.
The underlying group signature is customized to meet our objective and improve its efficiency.
We also introduce a proof-of-concept implementation to demonstrate the protocol's feasibility.
We compare its performance to SSL communication and demonstrate its practicality, and conclude that the protocol realizes secure, anonymous, and authenticated communication between users and service providers with practical performance.
\end{abstract}

%%%%%%%%%%%%%%%%%%%%%%%%%%%%%%%%%%%%%%%%
% ACM specific paper category information
%%%%%%%%%%%%%%%%%%%%%%%%%%%%%%%%%%%%%%%%

\category{C.2.0}{Computer-Communication Networks}{General}[Security and Protection]
\category{H.1.m}{Information System}{Models and Principles}[Miscellaneous]
\category{E.3}{Data}{Data Encryption}[Public key cryptosystems]
%\category{K.6.5}{Management of Computing and Information Systems}{Security and Protection}

\terms{Security, Design, Theory}

%%%%%%%%%%%%%%%%%%%%%%%%%%%%%%%%%%%%%%%%
% Keywords
%%%%%%%%%%%%%%%%%%%%%%%%%%%%%%%%%%%%%%%%

\keywords{Anonymous Communication, Anonymous Authentication, Secure Channel, Identity-Based Encryption, Group Signature}

%%%%%%%%%%%%%%%%%%%%%%%%%%%%%%%%%%%%%%%%
% Introduction
%%%%%%%%%%%%%%%%%%%%%%%%%%%%%%%%%%%%%%%%

\section{Introduction}

Anonymity\footnote{This paper considers sender/prover anonymity and does not consider recipient anonymity.} is an important aspect of privacy, and systems that provide services to anonymous users are currently a topic of keen interest. 
Such systems can provide services to users without revealing their identity. 
To date, a great deal of studies have been reported on~\cite{SPA}, 
and many use cryptography as the important building block for constructing the systems, 
but these need further improvement before they can be used for actual services.
To realize secure, anonymous, and authenticated communication, these building blocks need to collaborate with each other.

\subsection{Research Background}

Several cryptographic primitives providing anonymity have been proposed. 
Among them is group signature~\cite{[ChaumH91]}, which enables a signer to anonymously prove signatures' validity.
A group manager (GM) that has a pair of a group public key, $gpk$, and master secret key, $msk$, issues a secret signing key, $sk_i$, to a user $U_i$, which computes a group signature, $\sigma$ (on certain messages), using $sk_i$. 
No user-dependent value is required in the verification phase; a verifier verifies $\sigma$ using only the corresponding $gpk$. 
%and judges whether a signer is a group member without identifying the signer.
These approaches alone, however, cannot guarantee anonymity when applied to online communication.
For instance, let a signer compute a group signature and \emph{send} it to a verifier. 
The verifier can anonymously verify the signature's validity. 
%but it cannot know the signer's identity due to the group signature.
However, there is a question of how to anonymously send the group signature to the verifier. 
Usually, a source IP address is included in a packet, and that reveals the identity of the sender, thus user anonymity is already infringed upon.
The situation remains the same regardless of the primitives we implement so long as direct communication between a sender (signer, prover, etc) and a receiver (verifier, etc) is required.

The user's IP address is naturally visible in the IP packets sent from the user, and it cannot simply be erased or forged to enable bi-directional communication. 
One approach for this is using intermediate agents that send packets on behalf of the actual user terminal, and several such protocols have already been proposed~\cite{SPA}, including Tor~\cite{TorProject}. 
%If an anonymous channel can be established between end users and a server by using intermediate agents, the source IP can be hidden.
Nevertheless, another issue arises in the question of how we can assure user's legitimacy. 
We need to discern legitimate and illegitimate users to restrict unauthorized access to the channel.
One might think that only end-to-end authentication is needed, but it is hard to authenticate users without identifying them.
For instance, a server needs to send a response code to a user in basic authentication and the user needs to return a user ID and password.
That is, the server needs to identify the user.
Moreover, it seems to be hard to send a certain message back from the server to a user since the corresponding source IP address is generally required. 
Authentication by an intermediate agent (as in Tor~\cite{TorProject}) might be a solution to these problems.
The agent can authenticate a user and can hide the user's source IP address from the server. 
That notwithstanding, we still need to know how the server can directly authenticate end users. 
%Indeed, users are authenticated when they log in a system, and it is unrealistic to assume that all such users are legitimate.
%Moreover, we cannot assume that all the intermediate agents are actually trustworthy.

A simple approach to the anonymous authentication problems is just combining both cryptographic primitives and anonymous communication protocols as follows. 
Let a user compute an anonymously-authenticated token (e.g., group signature), 
and send it to a server via an anonymous channel (e.g., using Tor). 
Then, the server can directly authenticate the user without compromising anonymity. 
However, another problem arises is how we can establish a secure channel (i.e., flowed data is encrypted). 
If the server uses a user's public key (certified by a trusted Certificate Authority (CA) in a public key infrastructure (PKI)), 
then server identifies the user, since a certificate contains information on the key holder. 
The same problem arises even if symmetric key encryption is used. 
Assume that the server tries to exchange a secret key with an end user.
Since the server does not know who the actual end user is, the server does not know the user's public key for running a key exchange protocol. 
%To sum up, it is hard to realize secure, anonymous, and authenticated communication just by combining anonymous communication protocols, anonymous authentication protocols, and other cryptographic tools

\subsection{Our Contribution}

We propose a protocol that realizes secure, anonymous, and authenticated communication.
The proposed protocol uses identity-based encryption (IBE) to encrypt content without identifying key holders\footnote{The conventional public key encryption (PKE) with certain non-interactive zero-knowledge (NIZK) proofs may also be applicable, where a user chooses a temporary public key for each session, and makes a NIZK proof of the corresponding secret key. We do not consider this construction anymore since the NIZK proof must be constructed from scratch by considering algebraic structures of the underlying PKE scheme, and this may lead to some difficulty of its implementation. }.
It can set arbitrary values on public keys, thus it can enable a user to select a temporary ID for each session, which the server can use as a public key.
It also uses group signature for anonymous user authentication. 
Communications in the protocol take place through proxy entities that conceal users' IP addresses from service providers (SPs). 

This paper gives the framework of the proposed protocol, gives a formal model and security definitions of the proposed protocol, points out the needlessness of the group signature's open capability for our usage, 
%, where only GM can identify signatures' signers, 
and then proposes an open-free variant of the Furukawa-Imai group signature scheme~\cite{[FurukawaI06]}. 
The modification can reduce its signature size by 50\% compared to the original scheme.
Note that if someone needs to identify an illegitimate user, we can add such a mechanism without relying on cryptographic techniques; e.g., an IP address managed by the proxy. 

We demonstrate the feasibility and practicality of the proposed protocol by introducing our proof-of-concept implementation.
The implementation uses the modified group signature scheme mentioned above.
It also uses the Boneh-Franklin IBE scheme~\cite{[BonehF03]} for its underlying IBE scheme. 

Note that, our protocol in this paper focuses on encrypted communication from the SP to users. 
%\footnote{One use case of this type of application is push-type secure information delivery service.}.
It can easily be extended to interactive secure communication since SP is not anonymous to users and each user thus can simply use SP's public key for building a secure channel.

\subsection{Related Work}

There exist similar attempts to our approach.
Sudarsono et al.~\cite{[SudarsonoNNF10]} has considered an anonymous IEEE802.1X authentication system by using a group signature scheme.
They use group signatures as the client's digital certificate.
The means of sending such certificates over IP networks was, however, outside its scope. 
Lee et al.~\cite{anon-pass13} proposed an anonymous subscription service, called Anon-pass.
Their construction methodology is similar to group signatures, wherein a user proves the possession of signatures using zero-knowledge proofs.
Though Anon-pass does not consider end-to-end secure (encrypted) communication, our protocol does. 

Gilad and Herzberg~\cite{[GiladH13]} also considered how to distribute public keys using an anonymous service.
They consider two peers, a querier and a responder.
The querier specifies a random ephemeral public key that is not certified by the CA, 
and sends a query containing this public key to a responder via an anonymous service, like Tor. 
The responder replies with a response message encrypted by the (anonymous) querier's ephemeral public key.
However, a responder cannot check whether a public key is a valid key or a random value since this scheme gives no certification of the public key, and moreover the responder cannot detect even if the pubic key is replaced by an attacker. 
Moreover, no anonymous user authentication is considered in the Gilad-Herzberg system. 
In our protocol, the SP can be convinced that a public key (i.e., a temporary ID) will work, since arbitrary values can be public keys in IBE systems. 
Moreover, since a temporary ID is signed by group signature, 
we can prevent the key replacement attack and can achieve anonymous user authentication, simultaneously. 

Proxy re-encryption (PRE) (e.g.,~\cite{[LibertV11]}) 
is another candidate. 
%for building secure channels without conflicting anonymity. 
In our context, first users compute re-encryption keys using their secret key and the SP public key, and the SP only computes ciphertext using its own public key. Then, the proxy can re-encrypt ciphertexts. 
However, the proxy needs to manage all re-encryption keys, and therefore it is difficult to assume that no private information is infringed on even if the proxy is corrupted after the communication. Moreover, there is a possibility that other user may decrypt unexpected ciphertexts, since the proxy manages many re-encryption keys (from the SP to each user). 
%, and the original ciphertext is computed by the SP public key regardless of the recipients. 
It is undesirable to generate re-encryption keys that can be used in an unexpected manner, even if the proxy is modeled as an honest-but-curious entity and always follows the protocol. 
In our protocol no unexpected user (including the proxy) can decrypt ciphertexts, since a unique temporary ID is assigned for each user and each session. Note that the key escrow problem happens as an outcome of IBE, where key generation center (KGC) can decrypt all ciphertexts. However, KGC is modeled as a trusted third party, whereas it is difficult to fully trust all proxies involved in the systems. 

%\subsection{Organization} 
%This paper is organized as follows. We introduce the formal security model of IBE in Section 2. 
%Our open-free variant Furukawa-Imai group signature scheme is given in Section 3. 
%We give our proposed secure anonymous authentication protocol in Section 4. 
%An implementation result of our protocol (based on the Boneh-Franklin IBE scheme and our open-free group signature scheme) is shown in Section 5. 

%%%%%%%%%%%%%%%%%%%%%%%%%%%%%%%%%%%%%%%%
% Framework
%%%%%%%%%%%%%%%%%%%%%%%%%%%%%%%%%%%%%%%%

\section{Framework}

Figure \ref{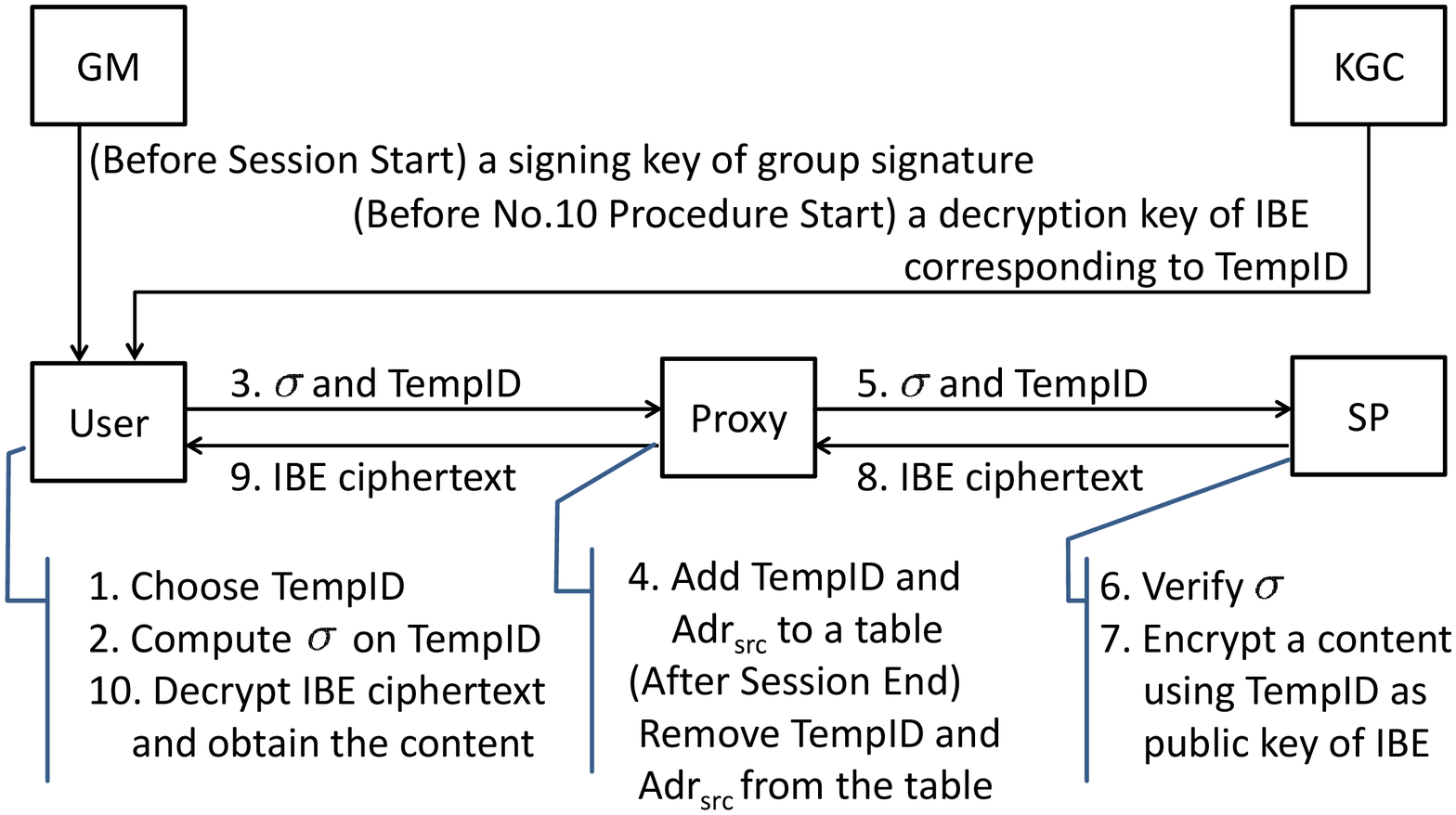} depicts the framework of the proposed protocol, 
which has five roles; User, Proxy, Service Provider, GM, and KGC.
A {\bf User} wishes to communicate with an SP without revealing its identity.
The {\bf Proxy} assists communication between a User and SP by relaying packets without revealing the User's IP address.
We assume that it is honest-but-curious.
The {\bf SP} provides services to Users, but wishes to authenticate them.
It does not care about their identities but needs to confirm whether the user accessing it is legitimate.
The {\bf GM} manages a group key and issuer key, and issues a signing key for a user that is used for generating an anonymously-authenticated token.
We assume that the GM suitably authenticates a user before issuing the signing key. 
The {\bf KGC} generates a decryption key for a user. 
We assume that the KGC suitably authenticates a user before issuing the decryption key. 

\begin{figure}[t]
\centering
{\includegraphics[scale=.31]{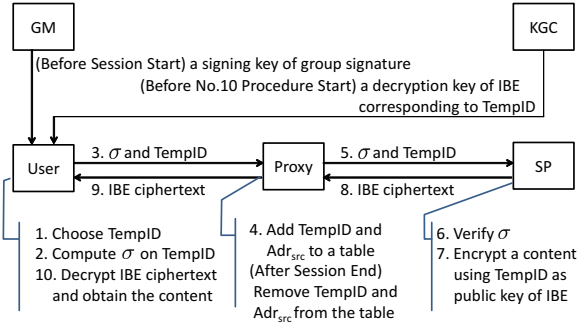}}
\caption{Framework of the Proposed Protocol}\label{fig1.eps}
\end{figure}

These roles need to collaborate with each other to realize the proposed secure anonymous authentication.
Their interaction sequence is as follows.
(1) A user (whose IP address is $\mathtt{Adr_{src}}$) chooses a temporary ID $\mathtt{TempID}$, (2) computes a group signature $\sigma$ on $\mathtt{TempID}$, and (3) sends $(\sigma, \mathtt{TempID})$ to the proxy.
(4) The proxy associates $\mathtt{Adr_{src}}$ with this temporary ID, and (5) forwards $(\sigma, \mathtt{TempID})$ to the SP.
(6) The SP can directly authenticate the users by verifying the group signature without compromising anonymous communications.
(7) If the user is successfully verified, the SP encrypts content using TempID as the public key of IBE; otherwise it returns $\bot$.
Here, we apply an IBE's property to establish a secure channel between the SP and an anonymous user, where arbitrary values can be a public key, and a ciphertext can be independently computed with the generation of the corresponding decryption key\footnote{This property is used in timed-release encryption~\cite{[CheonHKO08]} context, where an encryptor can control when ciphertexts will be decrypted.}. 
(8) The SP sends this IBE ciphertext to the proxy, which again (9) forwards it to the corresponding user. (10) Finally, the user decrypts the IBE ciphertext using the corresponding decryption key issued by the KGC. 
After relaying the (mutual) communication, the proxy immediately deletes the corresponding pair of $(\mathtt{TempID}, \mathtt{Adr_{src}})$. 
Therefore, no private information is infringed on even if the proxy is corrupted after the communication. 
Note that Figure~\ref{fig1.eps} explains one-pass communication.
The proxy can reuse the information of the pair $(\mathtt{TempID}, \mathtt{Adr_{src}})$ of a session so long as the session is alive, but it removes the information from its registry once the session is closed. 
%The proxy can reuse $(\mathtt{TempID}, \mathtt{Adr_{src}})$ during the session if more than one-pass communication is needed.
In either case, the proxy immediately deletes the corresponding pair $(\mathtt{TempID}, \mathtt{Adr_{src}})$ after the session.
Moreover, we can easily extend one-proxy setting to multi-proxy setting, since all the proxy has to do is (1) manage $(\mathtt{TempID}, \mathtt{Adr_{src}})$, and (2) forwarding $(\sigma, \mathtt{TempID})$ to the next. 
The above framework is embodied as a concrete protocol in the following section.

%%%%%%%%%%%%%%%%%%%%%%%%%%%%%%%%%%%%%%%%
% Proposed protocol
%%%%%%%%%%%%%%%%%%%%%%%%%%%%%%%%%%%%%%%%

\section{Authentication Protocol}

This section proposes a secure anonymous authentication protocol.\footnote{Note that our protocol achieves to send an anonymous token (group signature) and it is not an authentication protocol in the strict sense. Nevertheless, we can easily extend it to an authentication protocol (via the classical challenge-and-response methodology) as follows: first a SP sends a random nonce to a User (via the Proxy) and the User computes a group signature whose signed message contains the nonce. Therefore, we do not further consider the extension in this paper.}
It first defines the syntax of the protocol, 
and then gives its construction. 
We consider a scenario in which an SP is modeled as a server, which provides a service only for a legitimate user. 
That is, we can assume that the GM has authenticated a user before issuing a signing key, and the SP can judge that the user who can generate a valid group signature is a legitimate user. 

\subsection{Syntax and Security Definitions}

Let $\mathcal{ID}$ and $\mathcal{M}$ be an identity space and message space, respectively, 
and $\mathtt{Adr_{src}}$, $\mathtt{Adr_{proxy}}$, and $\mathtt{Adr_{dst}}$ stand for IP address of User, Proxy, and SP, respectively. 
For a set $X$ and an element $x\in X$, $x\stackrel{\$}{\leftarrow}X$ means that $x$ is randomly chosen from $X$. 

\begin{definition}[Syntax of The Protocol]~
\begin{description}
\setlength{\itemsep}{0em}\setlength{\parsep}{0em}
\item[${\sf GM.Setup}$]: This probabilistic algorithm takes as input the security parameter $\lambda$, and outputs a group public key $gpk$ and an issuer key $ik$.
\item[${\sf KGC.Setup}$]: This probabilistic algorithm takes as input the security parameter $\lambda$, and outputs a public key $params$ and a master secret key $msk$.  
\item[${\sf Join}$]: This probabilistic algorithm takes as input $gpk$ and $ik$, and outputs a signing key $sk$. 
\item[${\sf UserKeyGen}$]: This (possibly) probabilistic algorithm takes as input $params$, $msk$, and an (possibly temporary) identity $\mathtt{TempID}\in\mathcal{ID}$, and outputs a decryption key $dk_{\mathtt{TempID}}$. 
\item[${\sf SendRequest}$]: This probabilistic algorithm takes as input $gpk$, $sk$, $\mathtt{TempID}$, a source IP address $\mathtt{Adr_{src}}$, a destination IP address $\mathtt{Adr_{dst}}$, and a proxy IP address $\mathtt{Adr_{proxy}}$, and send a token $\sigma$, $\mathtt{TempID}$ and $\mathtt{Adr_{dst}}$ to the proxy whose IP address is $\mathtt{Adr_{proxy}}$. 
%We denote it $$(\sigma,\mathtt{TempID},\mathtt{Adr_{dst}})\stackrel{\mathtt{Adr_{proxy}}}{\longleftarrow}{\sf SendRequest}(gpk,sk,\mathtt{TempID},\mathtt{Adr_{src}},\mathtt{M}).$$ 
\item[${\sf RelayRequest}$]: This deterministic algorithm takes as input $\mathtt{Adr_{src}}$, $\mathtt{Adr_{dst}}$, an ID/IP table $\mathtt{Tbl}$, $\sigma$, and $\mathtt{TempID}$, and relays a pair $(\sigma,\mathtt{TempID})$ and $\mathtt{Adr_{proxy}}$ to the destination SP whose IP address is $\mathtt{Adr_{dst}}$. Moreover, append $(\mathtt{TempID}, \mathtt{Adr_{src}})$ to $\mathtt{Tbl}$.% We denote it $$(\sigma,\mathtt{TempID},\mathtt{Adr_{proxy}})\stackrel{\mathtt{Adr_{dst}}}{\longleftarrow}{\sf RelayRequest}(\mathtt{Adr_{src}},\mathtt{Adr_{dst}},\mathtt{Tbl},\sigma, \mathtt{TempID}).$$

\item[${\sf ValidityCheck}$]: This deterministic algorithm takes as input $gpk$, $\sigma$, and $\mathtt{TempID}$, and outputs 1 if $\sigma$ is valid, and 0, otherwise. 

\item[${\sf SendContent}$]: This probabilistic algorithm takes as input $gpk$, $\sigma$, $\mathtt{TempID}$, a content to be sent $M\in\mathcal{M}$, and $\mathtt{Adr_{proxy}}$, computes a ciphertext $C$ if the token $\sigma$ is valid, and sends $C$ to the proxy whose IP address is $\mathtt{Adr_{proxy}}$. Otherwise, if $\sigma$ is invalid, then return $\bot$. 
%We denote it $$C/\bot\stackrel{\mathtt{Adr_{proxy}}}{\longleftarrow}{\sf SendContent}(gpk,\sigma,\mathtt{TempID},M,\mathtt{Adr_{proxy}}).$$ 

\item[${\sf RelayContent}$]: This deterministic algorithm takes as input $C$ and $\mathtt{Tbl}$, and relays $C$ to a user whose IP address is $\mathtt{Adr_{src}}$ contained in $\mathtt{Tbl}$. Moreover, remove $(\mathtt{TempID}, \mathtt{Adr_{src}})$ from $\mathtt{Tbl}$. 
%We denote it $$C\stackrel{\mathtt{Adr_{src}}}{\longleftarrow}{\sf RelayContent}(C,\allowbreak \mathtt{Tbl}).$$
%
We assume that the proxy can decide the corresponding source IP address to be relayed by $C$\footnote{For example, port numbers can be used for identifying the sessions. It's up to the proxy in our implementation.}. 

\item[${\sf GetContent}$]: This deterministic algorithm takes as input $C$ and $dk_{\mathtt{TempID}}$, and return $M$. 
\end{description}
\end{definition}

Next, we give formal security definitions as follows. 
First, we define correctness that guarantees $\sigma$ is valid and a user always can obtain the corresponding content if all values are honestly generated according to the algorithms.

\begin{definition}[Correctness]
For all $(gpk,ik)\leftarrow {\sf GM.Setup}\allowbreak(1^\lambda)$, 
$(params,msk)\leftarrow {\sf KGC.Setup}(1^\lambda)$, 
$sk\leftarrow \linebreak {\sf Join}(gpk,ik)$,\allowbreak $\mathtt{TempID}\in\mathcal{ID}$, $M\in\mathcal{M}$, and $(\mathtt{Adr_{src}},\mathtt{Adr_{dst}},\allowbreak \mathtt{Adr_{proxy}})$, 

\begin{eqnarray*}
\Pr[M\leftarrow&&{\sf GetContent}\Big{(}{\sf RelayContent}\big{(}C,\mathtt{Tbl}\big{)},\\
&&{\sf UserKeyGen}(param, msk, \mathtt{TempID})\Big{)}]=1,~\text{and}\\
\Pr[1\leftarrow&&{\sf ValidityCheck}(gpk,\sigma,\mathtt{TempID})=1]=1
\end{eqnarray*}
\noindent 
where $(\sigma,\mathtt{TempID},\mathtt{Adr_{dst}})\leftarrow{\sf SendRequest}(gpk,sk,\mathtt{TempID},\allowbreak $

\noindent$\mathtt{Adr_{src}},\allowbreak \mathtt{Adr_{dst}},\mathtt{Adr_{proxy}})$, $(\sigma,\mathtt{TempID},\mathtt{Adr_{proxy}})\leftarrow{\sf RelayRequest}$

\noindent$(\mathtt{Adr_{src}},\allowbreak \mathtt{Adr_{dst}},\allowbreak \mathtt{Tbl},\sigma, \mathtt{TempID})$, and $C\leftarrow {\sf SendContent}(gpk,\sigma,\allowbreak \mathtt{TempID},\allowbreak M,\allowbreak \mathtt{Adr_{proxy}})$.
\end{definition}

\noindent 
Next, we define anonymity, semantic security, and unforgeability as follows. 
One session is defined as sequences of algorithm executions from ${\sf SendRequest}$ to ${\sf GetContent}$, where ${\sf SendRequest}\rightarrow {\sf RelayRequest} \rightarrow {\sf SendContent} \rightarrow \allowbreak{\sf RelayContent} \rightarrow {\sf GetContent}$. 
Anonymity guarantees that no adversary $\mathcal{A}$ who is allowed to communicate with the proxy (but is not allowed to know $\mathtt{Adr_{src}}$) can distinguish whether the users of two different sessions are the same or not. 
In this game, $\mathcal{A}$ is modeled as a malicious SP. 
Moreover, we care about signing key exposure, where $\mathcal{A}$ can obtain signing keys. In addition to this, we give $msk$ to $\mathcal{A}$ in order to guarantee that the KGC ability has nothing to right for identifying the user\footnote{Note that we exclude the trivial case that KGC is offered to generate a decryption key of $\mathtt{TempID}$ from a user whose IP address is $\mathtt{Adr_{src}}$, and observes that the transcript containing $\mathtt{TempID}$. }. 

\begin{definition}[Anonymity]~\label{def:anon}
\begin{enumerate}
\setlength{\itemsep}{0em}\setlength{\parsep}{0em}
\item The challenger $\mathcal{C}$ runs $(gpk,ik)\leftarrow{\sf GM.Setup}(1^\lambda)$ and $(params,msk)\leftarrow {\sf KGC.Setup}(1^\lambda)$, and computes two signing keys $sk_0,sk_1\leftarrow {\sf Join}(gpk,ik)$, 
and gives $gpk$, $sk_0$, $sk_1$, and $(params,msk)$ to an adversary $\mathcal{A}$. Moreover, $\mathcal{C}$ initializes $\mathtt{Tbl}:=\emptyset$. 

\item $\mathcal{A}$ is allowed to issue the ${\sf SendRequest}$ query $(i,\mathtt{TempID})\allowbreak \in\{0,1\}\times\mathcal{ID}$. $\mathcal{C}$ runs ${\sf SendRequest}(gpk,\allowbreak sk_b,\mathtt{TempID},\allowbreak \mathtt{Adr_{src}},\allowbreak \mathtt{Adr_{dst}},\mathtt{Adr_{proxy}})$, and returns $\sigma$ (generated via the ${\sf SendRequest}$ algorithm) to $\mathcal{A}$. 

\item $\mathcal{A}$ is allowed to issue the ${\sf RelayRequest}$ query $(\sigma,\mathtt{TempID})$. $\mathcal{C}$ runs ${\sf RelayRequest}(\mathtt{Adr_{src}},\allowbreak \mathtt{Adr_{dst}},\allowbreak \mathtt{Tbl},\allowbreak \sigma, \mathtt{TempID})$ and updates $\mathtt{Tbl}$.  

\item $\mathcal{A}$ is allowed to issue the ${\sf RelayContent}$ query $C$. $\mathcal{C}$ runs ${\sf RelayContent}(C,\allowbreak \mathtt{Tbl})$, and updates $\mathtt{Tbl}$. 

\item $\mathcal{A}$ sends $\mathtt{TempID}^\ast\in\mathcal{ID}$ to $\mathcal{C}$. 
$\mathcal{C}$ flips a coin $b\stackrel{\$}{\leftarrow}\{0,1\}$, and runs $(\sigma^\ast,\mathtt{TempID}^\ast,\allowbreak \mathtt{Adr_{dst}})\leftarrow{\sf SendRequest}(gpk,$

\noindent$sk_b,\mathtt{TempID}^\ast,\mathtt{Adr_{src}},\allowbreak \mathtt{Adr_{dst}},\allowbreak \mathtt{Adr_{proxy}})$ and $(\sigma^\ast,\allowbreak \mathtt{TempID}^\ast,\allowbreak \mathtt{Adr_{proxy}})\leftarrow{\sf RelayRequest}(\mathtt{Adr_{src}},\allowbreak \mathtt{Adr_{dst}},\mathtt{Tbl},\allowbreak \sigma^\ast, \mathtt{TempID}^\ast)$. 
$\mathcal{A}$ returns an arbitrary $C$ to $\mathcal{C}$. 
$\mathcal{C}$ runs $C\leftarrow{\sf RelayContent}(C,\allowbreak \mathtt{Tbl})$. 
Note that $\mathcal{A}$ can know the transcript of these algorithms executed by $\mathcal{C}$: $(\sigma^\ast,\mathtt{TempID}^\ast,\mathtt{Adr_{dst}})$, $(\sigma^\ast,\allowbreak \mathtt{TempID}^\ast,\allowbreak \mathtt{Adr_{proxy}})$, and $C$. 
$\mathcal{A}$ outputs $b^\prime\in\{0,1\}$. 
\end{enumerate}

\noindent
The protocol is said to have anonymity if ${\sf Adv}^{{\sf anon}}_{{\sf pro},\mathcal{A}}(\lambda):=|\Pr[b=b^\prime]-1/2|$ is negligible in $\lambda$.
\end{definition}

\noindent 
Next, we define semantic security which guarantees that no information of content $M$ is revealed from the transcripts of algorithms. In this game, an adversary $\mathcal{A}$ is modeled as a malicious proxy. Moreover, $\mathcal{A}$ is allowed to obtain $ik$ in order to guarantee that no information of $M$ is revealed even from the GM's viewpoint. 

\begin{definition}[Semantic Security]~
\begin{enumerate}
\setlength{\itemsep}{0em}\setlength{\parsep}{0em}
\item The challenger $\mathcal{C}$ runs $(gpk,ik)\leftarrow{\sf GM.Setup}(1^\lambda)$ and $(params,msk)\leftarrow {\sf KGC.Setup}(1^\lambda)$, and gives $gpk$, $ik$, and $params$ to an adversary $\mathcal{A}$. 

\item $\mathcal{A}$ is allowed to issue the ${\sf UserKeyGen}$ query $\mathtt{TempID}\in\mathcal{ID}$. $\mathcal{C}$ runs ${\sf UserKeyGen}(params,msk,\allowbreak \mathtt{TempID})$ and returns $dk_{\mathtt{TempID}}$. 

\item $\mathcal{A}$ sends $\mathtt{TempID}^\ast\in\mathcal{ID}, M_0^\ast,M_1^\ast\in\mathcal{M}$ and $sk^\ast$ to $\mathcal{C}$ as his choice, where $\mathtt{TempID}^\ast$ has not been sent as a ${\sf UserKeyGen}$ query. $\mathcal{C}$ flips a coin $b\stackrel{\$}{\leftarrow}\{0,1\}$, runs ${\sf SendRequest}(gpk,\allowbreak sk^\ast,\mathtt{TempID}^\ast,\allowbreak \mathtt{Adr_{src}},\mathtt{Adr_{dst}},\mathtt{Adr_{proxy}})$

\noindent and $C^\ast\leftarrow {\sf SendContent}(gpk,\sigma,\allowbreak \mathtt{TempID},\allowbreak M^\ast_b,\mathtt{Adr_{proxy}})$, and sends $(\sigma,\mathtt{TempID}^\ast,\mathtt{Adr_{dst}})$ and $C^\ast$ to $\mathcal{A}$. 

\item $\mathcal{A}$ is allowed to issue the ${\sf UserKeyGen}$ query $\mathtt{TempID}\in\mathcal{ID}$ where $\mathtt{TempID}\neq \mathtt{TempID}^\ast$. $\mathcal{C}$ runs ${\sf UserKeyGen}(params,\allowbreak msk,\allowbreak \mathtt{TempID})$ and returns $dk_{\mathtt{TempID}}$. 

\item Finally, $\mathcal{A}$ outputs $b^\prime\in\{0,1\}$. 
\end{enumerate}

\noindent 
The protocol is said to have semantic security if ${\sf Adv}^{{\sf ss}}_{{\sf pro},\mathcal{A}}(\lambda):=|\Pr[b=b^\prime]-1/2|$ is negligible in $\lambda$. 
\end{definition}

\noindent 
Finally, we define unforgeability which guarantees that no adversary $\mathcal{A}$ who does not have a signing key will be accepted by the ${\sf ValidityCheck}$ algorithm. In this game, $\mathcal{A}$ is modeled as a malicious user. Moreover, $\mathcal{A}$ is allowed to obtain $msk$ in order to guarantee that nobody can be accepted by SP even by KGC. 

\begin{definition}[Unforgeability]~
\begin{enumerate}
\setlength{\itemsep}{0em}\setlength{\parsep}{0em}
\item The challenger $\mathcal{C}$ runs $(gpk,ik)\leftarrow{\sf GM.Setup}(1^\lambda)$ and $(params,msk)\leftarrow {\sf KGC.Setup}(1^\lambda)$, and gives $gpk$, $params$, and $msk$ to an adversary $\mathcal{A}$. 
Moreover, $\mathcal{C}$ initializes $\mathcal{S}=\emptyset$. 

\item $\mathcal{A}$ is allowed to issue the ${\sf SendRequest}$ query $(i,\mathtt{TempID})$. If $sk_i$ has not been generated, then $\mathcal{C}$ runs $sk_i\leftarrow {\sf Join}(gpk,ik)$ and preserves $sk_i$. 
$\mathcal{C}$ runs ${\sf SendRequest}(gpk,\allowbreak sk_i,\allowbreak \mathtt{TempID},\allowbreak \mathtt{Adr_{src}},\allowbreak \mathtt{Adr_{dst}},\allowbreak \mathtt{Adr_{proxy}})$, and sends $\sigma$ to $\mathcal{A}$. 
Moreover, $\mathcal{C}$ appends $(\sigma,\mathtt{TempID})$ to $\mathcal{S}$. 

\item Finally, $\mathcal{A}$ outputs $(\sigma^\ast, \mathtt{TempID}^\ast)$. We say that $\mathcal{A}$ wins if $(\sigma^\ast, \mathtt{TempID}^\ast)\not\in\mathcal{S}$ and ${\sf ValidityCheck}(gpk,\sigma^\ast,\allowbreak \mathtt{TempID}^\ast)\allowbreak=1$. 
\end{enumerate}

\noindent 
The protocol is said to have unforgeability if ${\sf Adv}^{{\sf uf}}_{{\sf pro},\mathcal{A}}(\lambda):=\Pr[\mathcal{A}~\text{wins}]$ is negligible in $\lambda$. 
\end{definition}

\noindent 
We say that a protocol is called secure anonymous authentication protocol if the protocol is correct and has anonymity, semantic security, and unforgeability. 

%%%%%%%%%%%%%%%%%%%%%%%%%%%%%%%%%%%%%%%%
% Construction of Secure Anonymous Authentication Protocol
%%%%%%%%%%%%%%%%%%%%%%%%%%%%%%%%%%%%%%%%

\subsection{Protocol Construction}\label{Sec:ProtocolConstruction}

Here, we give our proposed protocol construction. 
First, we define the syntax of building blocks - IBE and open-free group signature - as follows:
%Note that their concrete security definitions are given in Appendix and Section~\ref{GS}, respectively.

An IBE scheme $\mathcal{IBE}$ consists of four algorithms: i.e., $({\sf IBE.Setup},\allowbreak {\sf Extract}, {\sf IBE.Enc}, {\sf IBE.Dec})$. 
Let $\mathcal{ID}$ and $\mathcal{M}$ be an identity space and message space, respectively. 

\begin{definition}[Syntax of IBE~\cite{[BonehF03]}]~
\begin{description}
\setlength{\itemsep}{0em}\setlength{\parsep}{0em}
\item[{\sf IBE.Setup}:] This algorithm takes as input the security parameter $\lambda$, and outputs a public key $params$ and a master secret key $msk$. 
\item[{\sf Extract}:] This algorithm takes as input $params$, $msk$, and an identity $ID\in\mathcal{ID}$, and outputs a decryption key $dk_{ID}$. 
\item[{\sf IBE.Enc}:] This algorithm takes as input $params$, $ID$, and a message $M\in\mathcal{M}$, and outputs a ciphertext $C_{IBE}$. 
\item[{\sf IBE.Dec}:] This algorithm takes as input $params$, $C_{IBE}$, and $dk_{ID}$, and outputs $M$. 
\end{description}
\end{definition}

\noindent 
We require the following correctness property: for all $(params,\allowbreak msk)\leftarrow{\sf IBE.Setup}(1^\lambda)$, all $ID$ and all $M$, $\Pr[{\sf IBE.Dec}(params,\allowbreak {\sf IBE.Enc}(params,ID,M), {\sf Extract}(params,msk,ID))=M]=1$ holds. 

An open-free group signature scheme $\mathcal{GS}$ consists of four algorithms: $({\sf GS.Setup}, {\sf Join}, {\sf Sign}, {\sf Verify})$\footnote{This new primitive is a kind of dynamic group signature, where a new member can join the system even after the setup phase. Note that, additional two algorithms, ${\sf Open}$ and ${\sf Judge}$, are usually contained in dynamic group signatures (e.g.~\cite{[BellareSZ05]}). 
%The ${\sf Open}$ algorithm identifies the actual signer by using the GM's secret key. 
The ${\sf Judge}$ algorithm checks a proof output by the ${\sf Open}$ algorithm, whether the ${\sf Open}$ algorithm is correctly executed or not. Obviously, the ${\sf Judge}$ algorithm is meaningless in the open-free variant.} as follows: 

\begin{definition}[Open-Free Group Signature]~
\begin{description}
\setlength{\itemsep}{0em}\setlength{\parsep}{0em}
\item[{\sf GS.Setup}:] This algorithm takes as input the security parameter $\lambda$, 
and outputs a group public key $gpk$ and an issuer key $ik$. 
\item[{\sf GS.Join}:] This algorithm takes as input $gpk$ and $ik$ (from GM), and a user is obtained a signing key $sk$. 

\item[{\sf Sign}:] This algorithm takes as input $gpk$, a signing key $sk$, and a message $M$, and outputs a group signature $\sigma$. 
\item[{\sf Verify}:] This algorithm takes as input $gpk$, $\sigma$, and $M$, and outputs $1$ if $\sigma$ is a valid signature on $M$, and 0 otherwise. 
\end{description}

%Note that we do not have to assume that the user has a secret key as a input of the ${\sf GS.Join}$ algorithm due to the open-free property. 
\end{definition}

\noindent 
We require the following correctness property: for all $(gpk,ik)\allowbreak \leftarrow {\sf GS.Setup}(1^\lambda)$ and $sk\leftarrow {\sf GS.Join}(gpk,ik)$, $\Pr[{\sf Verify}(gpk,\allowbreak {\sf Sign}(gpk,\allowbreak sk,M),M)=1]=1$ holds. 

Next, we give our proposed construction. In this construction, a signed message of the underlying group signature is $\mathtt{TempID}$ which is also regarded as a public key of the underlying IBE. 

\begin{con}[Proposed Protocol]~
\begin{description}
\setlength{\itemsep}{0em}\setlength{\parsep}{0em}
\item[${\sf GM.Setup}$]: Run $(gpk,ik)\leftarrow{\sf GS.Setup}(1^\lambda)$, and output $(gpk,\allowbreak ik)$. 

\item[${\sf KGC.Setup}$]: Run $(params,msk)\leftarrow{\sf IBE.Setup}(1^\lambda)$, and output $(params,msk)$. 

\item[${\sf Join}$]: Run $sk\leftarrow {\sf GS.Join}(gpk,ik)$, and output $sk$. 

\item[${\sf UserKeyGen}$]: Run $dk_{\mathtt{TempID}}\leftarrow{\sf Extract}(params,msk,\mathtt{TempID})$, 
and output $dk_{\mathtt{TempID}}$. 

\item[${\sf SendRequest}$]: Choose $\mathtt{TempID}\stackrel{\$}{\leftarrow} \mathcal{ID}$. Run $\sigma\leftarrow {\sf Sign}(gpk,sk,\allowbreak \mathtt{TempID})$, and send $(\sigma,\mathtt{TempID},\mathtt{Adr_{dst}})$ to the proxy whose IP address is $\mathtt{Adr_{proxy}}$. 

\item[${\sf RelayRequest}$]: Append $(\mathtt{TempID},\mathtt{Adr_{src}})$ to $\mathtt{Tbl}$, and relays a pair $(\sigma,\mathtt{TempID})$ and $\mathtt{Adr_{proxy}}$ to the destination SP whose IP address is $\mathtt{Adr_{dst}}$. 

\item[${\sf ValidityCheck}$]: Output $1$ if ${\sf Verify}(gpk,\sigma,\mathtt{TempID})=1$, and $0$, otherwise. 

\item[${\sf SendContent}$]: Output $\bot$ if ${\sf ValidityCheck}(gpk,\sigma,\mathtt{TempID})=0$. Otherwise, run $C_{IBE}\leftarrow {\sf IBE.Enc}(params,\mathtt{TempID},\allowbreak M)$, and send $C_{IBE}$ to the proxy whose IP address is $\mathtt{Adr_{proxy}}$. 

\item[${\sf RelayContent}$]: Relay $C_{IBE}$ to a user whose IP address is $\mathtt{Adr_{src}}$ contained in $\mathtt{Tbl}$. Moreover, remove $(\mathtt{TempID}, \allowbreak \mathtt{Adr_{src}})$ from $\mathtt{Tbl}$. 

%We assume that the proxy can decide the corresponding source IP address to be relayed by $C$. 

\item[${\sf GetContent}$]: Output the result of ${\sf IBE.Dec}(params,C_{IBE},\allowbreak dk_{\mathtt{TempID}} )$. 
\end{description}
\end{con}

Note that our the above construction only considers one-proxy setting, 
and therefore no anonymity is guaranteed from the viewpoint of the Proxy, 
since the Proxy directly relays communications between the User and SP. 
Note that this situation does not contradict our definition of anonymity (Def.~\ref{def:anon}). 
We can simply extend this protocol to a multi-proxy setting, where each Proxy relays $(\sigma, \mathtt{TempID})$ or $C_{IBE}$ between the previous Proxy and the next Proxy. Then, anonymity is guaranteed even from the Proxies' point of view unless all Proxies collude with each other. 

%%%%%%%%%%%%%%%%%%%%%%%%%%%%%%%%%%%%%%%%
% Performance enhancement
%%%%%%%%%%%%%%%%%%%%%%%%%%%%%%%%%%%%%%%%

\section{Group Signature}

The proposed secure anonymous authentication protocol uses a group signature scheme as its fundamental component. 
Though arbitrary group signature schemes could be used (i.e., by ignoring open functionality), it is beneficial to remove unnecessary functionality and improve performance efficiency, thus the proposed protocol in Section \ref{Sec:ProtocolConstruction} uses a group signature without open functionality. We call this an open-free group signature\footnote{A difference between ring signature~\cite{[RivestST01]} and open-free group signature is as follows. In ring signature schemes, a signer chooses a set of other members, and signs on behalf of the group of users. The anonymity of the signer cannot be revoked in contrast to group signature schemes. %(though sometimes ring signatures with linkability, e.g.,~\cite{[Fujisaki11]}, is considered to detect double signing for the same tag). 
However, a signer needs to involve/choose other members when the signer signs, and therefore needs to know other members. This does not match our setting. }. 
This section defines the security of such signatures. 

\subsection{Defining Open-free Group Signature}\label{GS}

In this section, 
we redefine the security definitions of the Furukawa-Imai group signature scheme, anonymity, traceability, and non-frameability, to match the open-free variant. 
Anonymity guarantees that no adversary $\mathcal{A}$ can distinguish whether two signers of group signatures are the same or not, even if $\mathcal{A}$ has the corresponding signing keys. 
Usually, there are two kind of anonymity, CPA-anonymity and CCA-anonymity. 
In CCA-anonymity, $\mathcal{A}$ is allowed to issue open queries, where $\mathcal{A}$ sends $(\sigma,M)$, and is given the result of the ${\sf Open}$ algorithm. 
Meanwhile, we do not have to consider these differences due to the open-free property. 

\begin{definition}[Anonymity]~
\begin{enumerate}
\setlength{\itemsep}{0em}\setlength{\parsep}{0em}
\item An adversary $\mathcal{A}$ with the security parameter $\lambda$ sends $gpk$, $sk_0$, $sk_1$, and $M$ to the challenger $\mathcal{C}$. 
\item $\mathcal{C}$ chooses $b\stackrel{\$}{\leftarrow}\{0,1\}$, 
computes $\sigma^\ast\leftarrow {\sf Sign}(gpk,sk_b,M)$, and sends $\sigma^\ast$ to $\mathcal{A}$. 
\item $\mathcal{A}$ outputs a bit $b^\prime\in\{0,1\}$. 
\end{enumerate}

\noindent 
An open-free group signature $\mathcal{GS}$ is said to have anonymity if ${\sf Adv}^{{\sf anon}}_{\mathcal{GS},\mathcal{A}}(\lambda):=|\Pr[b=b^\prime]-1/2|$ is negligible in $\lambda$. 
\end{definition}

Next, we redefine traceability. 
In usual definition, 
Traceability guarantees that no adversary $\mathcal{A}$ can produce a valid-but-untraceable group signature, that is, the ${\sf Open}$ algorithm cannot identify the corresponding signer though the ${\sf Verify}$ algorithm outputs 1. 
%In this game, $\mathcal{A}$ $\mathcal{A}$ can know the corresponding signing key. 
However, in the open-free variant, this definition is meaningless.
So, we define unforgeability here instead of traceability, where no adversary $\mathcal{A}$ can produce a valid group signature without knowing a signing key. 

\begin{definition}[Unforgeability]~
\begin{enumerate}
\setlength{\itemsep}{0em}\setlength{\parsep}{0em}
\item The challenger $\mathcal{C}$ runs $(gpk,ik)\leftarrow{\sf GS.Setup}(1^\lambda)$, and gives $gpk$ to an adversary $\mathcal{A}$. 
\item $\mathcal{A}$ is allowed to issue the signing query $(M,i)$. If a user $U_i$ has not been joined to the system, then $\mathcal{C}$ runs the ${\sf GS.Join}$ algorithm, computes $sk_i$, and returns $\sigma\leftarrow {\sf Sign}(gpk,sk_i,M)$ to $\mathcal{A}$. If $U_i$ has been joined to the system, then $\mathcal{C}$ returns $\sigma\leftarrow {\sf Sign}(gpk,sk_i,M)$ to $\mathcal{A}$. Moreover, $\mathcal{C}$ appends $(\sigma,M)$ into the list $\mathcal{S}$. 

\item Finally, $\mathcal{A}$ outputs $(\sigma^\ast,M^\ast)$.
We say that $\mathcal{A}$ wins if ${\sf Verify}(gpk,\sigma^\ast,M^\ast)=1$ holds and $(\sigma^\ast,M^\ast)\not\in\mathcal{S}$. 
\end{enumerate}

\noindent 
An open-free group signature $\mathcal{GS}$ is said to have unforgeability if ${\sf Adv}^{{\sf un}}_{\mathcal{GS},\mathcal{A}}(\lambda):=\Pr[\mathcal{A}~\text{wins}]$ is negligible in $\lambda$. 
\end{definition}

\noindent
%One may think that the above definition can be more stronger, where $\mathcal{A}$ is allowed to get a signing key (as in traceability of the conventional group signature), and $\mathcal{A}$ wins if $\mathcal{A}$ can produce a valid group signature using a signing key which is not given via the oracle. However, this cannot be well-defined, since nobody can distinguish whether $\mathcal{A}$ uses the own-made signing key or not, due to anonymity and the open-free property. 

Finally, we revisit non-frameability. 
Non-frameability guarantees that no adversary $\mathcal{A}$ can produce a valid group signature whose open result is an honest (i.e., uncorrupted by $\mathcal{A}$) user (say $U$). 
Obviously, this definition is meaningless in the open-free variant, and therefore we do not consider non-frameability. 

%Note that in non-frameability game, $\mathcal{A}$ can know all secret values, except $U$'s secret key (not exact signing key). 
Note that in order to achieve non-frameability in the original scheme, a user chooses a secret key $usk$, and is obtained its signing key $sk$ by executing the ${\sf GS.Join}$ algorithm with GM. What is critical, GM cannot know $usk$ itself (but can convince that the user knows $usk$ by using zero-knowledge proofs). In other words, we can remove a secret key $usk$ from the syntax of group signature unless non-frameability is required. 
This is is the reason why we do not require any secret key of users as input of the ${\sf GS.Join}$ algorithm, and the ${\sf GS.Join}$ algorithm can be a non-interactive algorithm. 

\subsection{Building Open-Free Group Signature}

Our group signature scheme modifies the Furukawa-Imai group signature~\cite{[FurukawaI06]}. 
In the Furukawa-Imai scheme, 
a user certificate issued by the GM is a short signature~\cite{[BonehB08]}. 
The user proves the possession of the certificate by NIZK proofs which are constructed via the Fiat-Shamir conversion~\cite{[FiatS86]}. 
For implementing the ${\sf Open}$ algorithm, an ElGamal-type double encryption is used over a decisional Diffie-Hellman (DDH)-hard group (in addition to bilinear groups).
In our open-free scheme, the DDH-hard group can be removed. 
%due to the open-free property. 
%Other part, NIZK proofs for a SDH pair, 
%\footnote{SDH stands for Strong Diffie-Hellman. Informally, we say that $q$-SDH assumption holds if no PPT adversary $\mathcal{A}$ can compute $(g_2^{1/(\gamma+x)},x)$ from $(g_1,g_2,g_2^\gamma,\ldots,g_2^{\gamma^q})$. }, 
Other part is the same as that of the original Furukawa-Imai group signature scheme. 

Note that, a simple construction, where for one signature verification/signing key pair $({\sf VK}, {\sf SK})$, each group member shares ${\sf SK}$, can also be seen as an open-free group signature scheme. However, this simple construction never realizes the revocation functionality~\cite{[LibertPY12]}. 
%Though we do not consider the revocation functionality in this paper, 
% (since our purpose is proof-of-concept implementation), 
Towards constructing a revocable open-free group signature scheme, we newly construct an open-free group signature scheme. 

\begin{con}[Proposed Open-Free Group Signature]~

\begin{description}
\setlength{\itemsep}{0em}\setlength{\parsep}{0em}
\item[{\sf GS.Setup}:] Let $(\mathbb{G}_1,\mathbb{G}_2,\mathbb{G}_T)$ be a bilinear group with prime order $p$, where $\langle g_1\rangle=\mathbb{G}_1$, $\langle g_2\rangle=\mathbb{G}_2$, and $e:\mathbb{G}_1\times\mathbb{G}_2\rightarrow \mathbb{G}_T$ be a bilinear map\footnote{We require bilinearity: for all $a,b\in\mathbb{Z}_p$, $e(g_1^a,g_2^b)=e(g_1,g_2)^{ab}=e(g_1^b,g_2^a)$, and non-degeneracy: $e(g_1,g_2)\neq 1_{\mathbb{G}_T}$, where $1_{\mathbb{G}_T}$ is the identity element in $\mathbb{G}_T$.}. 
Choose $\gamma\stackrel{\$}{\leftarrow}\mathbb{Z}_p$, and $h\stackrel{\$}{\leftarrow}\mathbb{G}_1$, and compute $W=g_2^\gamma$. Output $gpk=\linebreak(\mathbb{G}_1,\mathbb{G}_2,\mathbb{G}_T,e,g_1,g_2,h,W, e(g_1,g_2),e(g_1,W),H_3)$ and $ik\allowbreak =\gamma$, where $H_3:\{0,1\}^\ast\rightarrow \mathbb{Z}_p$ is a hash function modeled as a random oracle. 

\item[{\sf GS.Join}:] For a user $U_i$, choose $x_i,y_i\stackrel{\$}{\leftarrow}\mathbb{Z}_p$, compute $A_i=(g_1h^{-y_i})^{\frac{1}{\gamma+x_i}}$, and output $sk_i=(x_i,y_i,A_i)$. 

\item[{\sf Sign}:] Let $sk=(x,y,A)$. 
Choose $\beta\stackrel{\$}{\leftarrow}\mathbb{Z}_p$, set $\delta=\beta x-y$, and compute $T=A h^\beta$. Choose $r_{x},r_{\delta}, r_\beta\stackrel{\$}{\leftarrow}\mathbb{Z}_p$, and compute $R=e(h,g_2)^{r_\delta}e(h,W)^{r_\beta}/e(T,g_2)^{r_x}$, $c=H_3(gpk,T,R,M)$, $s_x=r_x+cx$, $s_\delta=r_\delta+c \delta$, and $s_\beta=r_\beta+c\beta$, and output $\sigma=(T,c,s_x,s_\delta,s_\beta)$. 

\item[{\sf Verify}:] Compute $R^\prime=\frac{e(h,g_2)^{s_\delta}e(h,W)^{s_\beta}}{e(T,g_2)^{s_x}}\big{(}\frac{e(T,W)}{e(g_1,g_2)}\big{)}^{-c}$, and output 1 if $c=H_3(gpk,T,R^\prime,M)$ holds, and 0 otherwise. 
\end{description}
\end{con}

Compared to the original Furukawa-Imai scheme, 
we can reduce three DDH-hard group elements and three $\mathbb{Z}_p$ elements. 
Accordingly, we can reduce the size of signature by 50\% compared to the original Furukawa-Imai group signature scheme. 

%Note that $e(A,g_2^x W)=e(g_1,g_2)e(h,g_2)^{-y}$ holds if $(x,y,A)$ is a valid certificate. From this equation, 
%$$\frac{e(T,W)}{e(g_1,g_2)}=\frac{e(h,g_2)^{\beta x-y}e(h,W)^\beta}{e(T,g_2)^x}$$ 
%holds for $T=A h^\beta$. Therefore, 

%\begin{gather*}
%\begin{split}
%\frac{e(h,g_2)^{s_\delta}e(h,W)^{s_\beta}}{e(T,g_2)^{s_x}}&=\frac{e(h,g_2)^{r_\delta}e(h,W)^{r_\beta}}{e(T,g_2)^{r_x}}\big{(}\frac{e(h,g_2)^{\beta x-y}e(h,W)^\beta}{e(T,g_2)^x}\big{)}^c\\
%&=R\big{(}\frac{e(T,W)}{e(g_1,g_2)}\big{)}^c
%\end{split}
%\end{gather*}
%holds. 

\section{Implementation and Discussion}

\subsection{Analysis on Proposed Protocol}

We can prove that our proposed protocol is secure if the underlying IBE scheme is IND-ID-CPA secure (like the Boneh-Franklin IBE scheme~\cite{[BonehF03]}) and the underlying group signature scheme is anonymous and unforgeable. 
We only give a sketch of proof of anonymity (other theorems can be similarly proved) here and omit the full proofs of the following theorems due to the page limitation. 

\begin{theorem}
Our protocol is anonymous if the underlying group signature scheme is anonymous. 
\end{theorem}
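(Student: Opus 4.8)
The plan is to give a direct, simulation-based reduction establishing ${\sf Adv}^{{\sf anon}}_{{\sf pro},\mathcal{A}}(\lambda) = {\sf Adv}^{{\sf anon}}_{\mathcal{GS},\mathcal{B}}(\lambda)$ for a suitable $\mathcal{B}$. The key structural observation is that, in the protocol anonymity game (Def.~\ref{def:anon}), the challenge bit $b$ influences $\mathcal{A}$'s view \emph{only} through the signature $\sigma^\ast \leftarrow {\sf Sign}(gpk, sk_b, \mathtt{TempID}^\ast)$ generated inside the challenge ${\sf SendRequest}$ call: the values $gpk$, $sk_0$, $sk_1$, $params$, $msk$, the relayed triples $(\sigma^\ast, \mathtt{TempID}^\ast, \mathtt{Adr_{dst}})$ and $(\sigma^\ast, \mathtt{TempID}^\ast, \mathtt{Adr_{proxy}})$, the updates to $\mathtt{Tbl}$, and the output of ${\sf RelayContent}$ on the ciphertext $C$ that $\mathcal{A}$ itself supplies are all produced by deterministic relaying algorithms that do not depend on $b$. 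Hence breaking protocol anonymity amounts exactly to distinguishing a group signature on $\mathtt{TempID}^\ast$ issued under $sk_0$ from one issued under $sk_1$, which is the task of the group-signature anonymity adversary.

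Concretely, $\mathcal{B}$ runs $(gpk, ik) \leftarrow {\sf GS.Setup}(1^\lambda)$, $sk_0, sk_1 \leftarrow {\sf GS.Join}(gpk, ik)$ and $(params, msk) \leftarrow {\sf IBE.Setup}(1^\lambda)$ itself, hands $gpk$, $sk_0$, $sk_1$, $(params, msk)$ to $\mathcal{A}$, and sets $\mathtt{Tbl} := \emptyset$. It answers ${\sf SendRequest}$ queries $(i, \mathtt{TempID})$ by computing ${\sf Sign}(gpk, sk_i, \mathtt{TempID})$ with the key it holds, and ${\sf RelayRequest}$ and ${\sf RelayContent}$ queries by running those algorithms and updating $\mathtt{Tbl}$. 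On receiving the challenge $\mathtt{TempID}^\ast$, $\mathcal{B}$ forwards $(gpk, sk_0, sk_1, M := \mathtt{TempID}^\ast)$ to its own challenger, obtains $\sigma^\ast = {\sf Sign}(gpk, sk_b, \mathtt{TempID}^\ast)$, assembles $(\sigma^\ast, \mathtt{TempID}^\ast, \mathtt{Adr_{dst}})$, runs ${\sf RelayRequest}$ to get $(\sigma^\ast, \mathtt{TempID}^\ast, \mathtt{Adr_{proxy}})$ and update $\mathtt{Tbl}$, passes these to $\mathcal{A}$, runs ${\sf RelayContent}$ on the $C$ that $\mathcal{A}$ returns, and finally outputs the same bit $b'$ that $\mathcal{A}$ outputs.

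It then remains to check that the simulation is perfect: the parameters and keys $\mathcal{B}$ generates are identically distributed to those of the real protocol challenger, and, conditioned on the group-signature challenger's bit, $\sigma^\ast$ has exactly the distribution prescribed by ${\sf SendRequest}$, so $\Pr[\mathcal{B}\text{ wins}] = \Pr[\mathcal{A}\text{ wins}]$. I expect the only real obstacle to be routine bookkeeping rather than a genuine difficulty: one must verify that nothing handed to $\mathcal{A}$ leaks $b$-dependent data — in particular the treatment of the IBE ciphertext (for which $\mathcal{A}$ chooses $\mathtt{TempID}^\ast$ and is given $msk$ anyway, so it carries no signer information) and the behaviour of ${\sf RelayContent}$ — and that $\mathcal{A}$ never learns $\mathtt{Adr_{src}}$, which $\mathcal{B}$ simply fixes internally and never reveals. (Should the pre-challenge ${\sf SendRequest}$ queries instead be answered with a single fixed key, a standard hybrid over the number of such queries would replace this single-shot reduction at the cost of a polynomial factor; in the game as stated this is unnecessary.)
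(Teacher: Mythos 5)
Your reduction is correct and is essentially the paper's own proof: $\mathcal{B}$ generates $gpk$, $sk_0$, $sk_1$ and all IBE-related values itself, forwards $\mathtt{TempID}^\ast$ as the challenge message to the group-signature challenger, embeds the returned $\sigma^\ast$ as the output of ${\sf SendRequest}$, simulates the relaying algorithms, and outputs $\mathcal{A}$'s bit, giving the same advantage. Your additional observations (that $b$ enters the view only through $\sigma^\ast$, and the remark about a possible hybrid over pre-challenge ${\sf SendRequest}$ queries) merely flesh out details the paper's sketch leaves implicit.
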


\begin{proof}[(Sketch)] Let $\mathcal{A}$ be an adversary who can break anonymity of our protocol. Then, we can construct an algorithm $\mathcal{B}$ that breaks anonymity of the underlying group signature scheme as follow. Let $\mathcal{C}$ be the challenger of the underlying group signature. 
$\mathcal{B}$ generates $gpk$, $sk_0$, and $sk_1$, and generates all IBE-related values. Then $\mathcal{B}$ gives $(gpk,sk_0,sk_1,\allowbreak params,\allowbreak msk)$ to $\mathcal{A}$. In the challenge phase, $\mathcal{B}$ gets $\mathtt{TempID}^\ast$ from $\mathcal{A}$, forwards it to $\mathcal{C}$, and gets $\sigma^\ast$ from $\mathcal{C}$. $\mathcal{B}$ uses $\sigma^\ast$ as the output of the ${\sf SendRequest}$ algorithm, and similarly simulates other algorithms. $\mathcal{A}$ outputs $b^\prime$ and $\mathcal{B}$ also outputs $b^\prime$ as the guessing bit. Then, $\mathcal{B}$ can break anonymity of the group signature with the same advantage of $\mathcal{A}$. This contradicts that the underlying group signature is anonymous.
\end{proof}

\begin{theorem}
Our protocol is semantic secure if the underlying IBE scheme is IND-ID-CPA secure. 
\end{theorem}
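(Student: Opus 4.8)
The plan is to prove the statement by a tight, perfectly simulating reduction: supposing an adversary $\mathcal{A}$ wins the semantic security game with non-negligible advantage, I would build an adversary $\mathcal{B}$ against the IND-ID-CPA security of the underlying IBE scheme $\mathcal{IBE}$. The key observation is that in the semantic security game $\mathcal{A}$ is modeled as a malicious proxy that \emph{additionally} holds $ik$, so all group-signature material is already under $\mathcal{B}$'s control. Concretely, $\mathcal{B}$ receives $params$ from its own IBE challenger $\mathcal{C}_{\mathcal{IBE}}$, runs $(gpk,ik)\leftarrow{\sf GS.Setup}(1^\lambda)$ locally, and hands $(gpk,ik,params)$ to $\mathcal{A}$. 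The only secret $\mathcal{B}$ lacks is $msk$, but $\mathcal{A}$ never sees $msk$ in this game; it only obtains decryption keys through ${\sf UserKeyGen}$ queries, and $\mathcal{B}$ answers each such query on $\mathtt{TempID}$ by relaying $\mathtt{TempID}$ to its own ${\sf Extract}$ oracle and returning the resulting $dk_{\mathtt{TempID}}$.

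Next I would handle the challenge phase. When $\mathcal{A}$ submits $(\mathtt{TempID}^\ast, M_0^\ast, M_1^\ast, sk^\ast)$ with $\mathtt{TempID}^\ast$ not previously queried to ${\sf UserKeyGen}$, the reduction $\mathcal{B}$ forwards $(\mathtt{TempID}^\ast, M_0^\ast, M_1^\ast)$ to $\mathcal{C}_{\mathcal{IBE}}$ as its own challenge and receives $C^\ast = {\sf IBE.Enc}(params, \mathtt{TempID}^\ast, M_b^\ast)$ for the challenger's hidden bit $b$. Since $\mathcal{B}$ knows $gpk$ and $\mathcal{A}$ supplied $sk^\ast$, it can compute $\sigma\leftarrow{\sf Sign}(gpk, sk^\ast, \mathtt{TempID}^\ast)$ exactly as ${\sf SendRequest}$ prescribes and then send $(\sigma, \mathtt{TempID}^\ast, \mathtt{Adr_{dst}})$ together with $C^\ast$ to $\mathcal{A}$; if $sk^\ast$ is malformed so that ${\sf ValidityCheck}(gpk,\sigma,\mathtt{TempID}^\ast)=0$, then ${\sf SendContent}$ would output $\bot$, so $\mathcal{B}$ simply returns $\bot$ (which carries no information about $b$). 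Post-challenge ${\sf UserKeyGen}$ queries are on $\mathtt{TempID}\neq\mathtt{TempID}^\ast$ and are again answered via the ${\sf Extract}$ oracle, and finally $\mathcal{B}$ echoes whatever bit $b'$ that $\mathcal{A}$ outputs.

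Then I would argue that the simulation is perfect. By the construction, ${\sf SendContent}$ produces exactly $C_{IBE} = {\sf IBE.Enc}(params,\mathtt{TempID}^\ast,M_b^\ast)$ whenever the token verifies, so the $C^\ast$ received from $\mathcal{C}_{\mathcal{IBE}}$ is distributed exactly as in the real semantic security experiment; every other message $\mathcal{A}$ sees is computed by $\mathcal{B}$ using the genuine algorithms with the genuine $gpk,ik,params$. The restriction ``$\mathtt{TempID}^\ast$ is never a ${\sf UserKeyGen}$ query'' translates verbatim into the ``$\mathtt{ID}^\ast$ is never an ${\sf Extract}$ query'' restriction of IND-ID-CPA, so $\mathcal{B}$ is a legitimate IBE adversary. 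Consequently $\mathcal{A}$'s view, jointly with the bit $b$, is identical in both experiments, and $\mathcal{B}$ guesses correctly exactly when $\mathcal{A}$ does, giving ${\sf Adv}^{{\sf ss}}_{{\sf pro},\mathcal{A}}(\lambda) = {\sf Adv}^{{\sf ibe}}_{\mathcal{IBE},\mathcal{B}}(\lambda)$, which is negligible by assumption.

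I do not expect a genuine obstacle here: the group signature plays no role whatsoever in concealing the content $M$, and it is entirely generated and controlled by $\mathcal{B}$, so the reduction is tight with no hybrids. The only points needing minor care are (i) confirming that disclosing $ik$ to $\mathcal{A}$ creates no difficulty, which it does not since $\mathcal{B}$ itself produced $ik$ and can answer everything the protocol specifies, and (ii) cleanly dispatching the degenerate case in which $\mathcal{A}$ supplies an $sk^\ast$ whose signature fails ${\sf ValidityCheck}$, which is handled by having both the protocol and $\mathcal{B}$ return $\bot$ so that no advantage can arise in that branch.
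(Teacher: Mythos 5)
Your reduction is correct and is exactly the argument the paper has in mind: it omits the full proof and notes that the remaining theorems are proved analogously to the anonymity sketch, i.e., by a direct reduction in which the simulator generates the group-signature material itself (here including $ik$ and the signature on $\mathtt{TempID}^\ast$ from the adversary's $sk^\ast$) and relays ${\sf UserKeyGen}$ queries and the challenge to the IND-ID-CPA challenger. Your handling of the $\bot$ case and the observation that the ${\sf UserKeyGen}$ restriction maps verbatim to the ${\sf Extract}$ restriction are the right minor details, so no gap remains.
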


%\begin{proof}[(Sketch)]
%\end{proof}

\begin{theorem}
Our protocol is unforgeable if the underlying group signature scheme is unforgeable. 
\end{theorem}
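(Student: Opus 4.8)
The plan is to give a direct black-box reduction from an adversary $\mathcal{A}$ against the protocol's unforgeability to an adversary $\mathcal{B}$ against unforgeability of the underlying open-free group signature scheme $\mathcal{GS}$. The key observation is that, in Construction (Proposed Protocol), the token $\sigma$ returned by ${\sf SendRequest}$ is exactly a group signature ${\sf Sign}(gpk,sk,\mathtt{TempID})$, and ${\sf ValidityCheck}(gpk,\sigma,\mathtt{TempID})$ outputs $1$ if and only if ${\sf Verify}(gpk,\sigma,\mathtt{TempID})=1$. Hence a forged token for the protocol is literally a forged group signature whose message happens to be a temporary identity, and the two winning conditions should coincide.

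Concretely, let $\mathcal{C}$ be the challenger of the $\mathcal{GS}$ unforgeability game. Upon receiving $gpk$ from $\mathcal{C}$, the reduction $\mathcal{B}$ runs ${\sf KGC.Setup}(1^\lambda)$ itself to obtain $(params,msk)$ and hands $(gpk,params,msk)$ to $\mathcal{A}$; this is a perfect simulation of the protocol's unforgeability experiment, since $msk$ is handed to $\mathcal{A}$ there as well and $\mathcal{B}$ generates all IBE material honestly. $\mathcal{B}$ initializes $\mathcal{S}:=\emptyset$. When $\mathcal{A}$ issues a ${\sf SendRequest}$ query $(i,\mathtt{TempID})$, $\mathcal{B}$ forwards $(\mathtt{TempID},i)$ to $\mathcal{C}$ as a signing query, receives $\sigma$, returns $\sigma$ to $\mathcal{A}$, and appends $(\sigma,\mathtt{TempID})$ to $\mathcal{S}$. (The ${\sf GS.Join}$ step and the bookkeeping of joined users are transparently handled inside $\mathcal{C}$; the only task $\mathcal{B}$ retains is maintaining its own copy of $\mathcal{S}$.) Finally, when $\mathcal{A}$ outputs $(\sigma^\ast,\mathtt{TempID}^\ast)$, $\mathcal{B}$ outputs the same pair as its own forgery.

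The remaining step is to argue that $\mathcal{B}$ wins whenever $\mathcal{A}$ wins. The list $\mathcal{S}$ maintained by $\mathcal{B}$ is identical to the set of message/signature pairs answered by $\mathcal{C}$ and also to the list $\mathcal{S}$ of the protocol game; therefore ``$(\sigma^\ast,\mathtt{TempID}^\ast)\notin\mathcal{S}$'' in the protocol game is equivalent to $(\sigma^\ast,\mathtt{TempID}^\ast)$ being a fresh pair for $\mathcal{C}$, and ``${\sf ValidityCheck}(gpk,\sigma^\ast,\mathtt{TempID}^\ast)=1$'' is equivalent to ``${\sf Verify}(gpk,\sigma^\ast,\mathtt{TempID}^\ast)=1$''. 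Consequently ${\sf Adv}^{{\sf uf}}_{{\sf pro},\mathcal{A}}(\lambda)={\sf Adv}^{{\sf un}}_{\mathcal{GS},\mathcal{B}}(\lambda)$, which is negligible by assumption, and $\mathcal{B}$ runs in essentially the same time as $\mathcal{A}$ and makes the same number of queries.

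There is no real obstacle here; the reduction is mechanical. The only point that needs a sentence of care is the minor syntactic mismatch between the abstract syntax of ${\sf SendRequest}$ (which takes $\mathtt{TempID}$ as an explicit input, as in the unforgeability game) and the concrete Construction (where ${\sf SendRequest}$ samples $\mathtt{TempID}$ internally). I would resolve this by observing that the winning condition only refers to the pair $(\sigma^\ast,\mathtt{TempID}^\ast)$ actually produced, so letting the adversary choose $\mathtt{TempID}$ in the experiment is harmless and strictly no weaker for the adversary; $\mathcal{B}$ simply passes whatever $\mathtt{TempID}$ appears in each query through to $\mathcal{C}$. Since the IBE components $(params,msk)$ are irrelevant to the validity condition, they play no role in the argument beyond being simulated honestly.
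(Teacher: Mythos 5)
Your reduction is correct and is exactly the argument the paper intends: the authors omit this proof, noting it is analogous to their anonymity sketch, which is the same style of black-box reduction (simulate the IBE side honestly, relay ${\sf SendRequest}$ queries to the group-signature challenger's signing oracle, and forward the forgery, using ${\sf ValidityCheck}={\sf Verify}$ and the identical list $\mathcal{S}$ to equate the winning conditions). No discrepancy with the paper's approach.
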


%\begin{proof}[(Sketch)]
%\end{proof}

\subsection{Analysis on Group Signature}

The remaining part is to show that the proposed open-free group signature scheme is anonymous and unforgeable. 
The proposed open-free group signature scheme is constructed from an (honest-verifier) zero-knowledge proof of knowledge 
%of an SDH pair 
by using the Fiat-Shamir conversion~\cite{[FiatS86]}. First, we explain the original proof of knowledge protocol as follows. 
%(notations are the same as these of the proposed group signature). 
A prover computes $(T,R)$, and sends it to a verifier. The verifier sends a challenge value $c$ to the prover. 
The prover computes $(s_x,s_\delta,s_\beta)$, and sends it to the verifier. The verifier checks whether the verification equation 
%$R=\frac{e(h,g_2)^{s_\delta}e(h,W)^{s_\beta}}{e(T,g_2)^{s_x}}\big{(}\frac{e(T,W)}{e(g_1,g_2)}\big{)}^{-c}$ 
holds or not. 
Next, we show that this 3-move protocol is zero-knowledge (this immediately leads to anonymity). 
The simulator chooses $A\stackrel{\$}{\leftarrow}\mathbb{G}$ and $\beta\stackrel{\$}{\leftarrow}\mathbb{Z}_p$, and computes $T=A g_1^\beta$. 
Note that $\beta$ is chosen uniformly random. Therefore, $T$ generated from the simulator is drawn from a distribution that is indistinguishable from the distribution output by any particular prover. 
%\footnote{If the simulator needs to compute a ciphertext, then we need to use some complexity assumption. E.g., the decision linear assumption is required in the Boneh-Boyen-Shacham scheme since $T$ is a ciphertext of the linear encryption. }.
For $T\in\mathbb{G}$, the simulator chooses $c,s_x,s_\delta,s_\beta\stackrel{\$}{\leftarrow}\mathbb{Z}_p$, and computes $R=\frac{e(h,g_2)^{s_\delta}e(h,W)^{s_\beta}}{e(T,g_2)^{s_x}}\big{(}\frac{e(T,W)}{e(g_1,g_2)}\big{)}^{-c}$. Then the transcript $(T,R,c,\allowbreak s_x,s_\delta,s_\beta)$ here is indistinguishable from transcripts of the actual protocol. 

Next, we show that the protocol is a proof of knowledge. That is, we show there exists an extractor that can extract a SDH pair from $(T,R,c,s_x,s_\delta,s_\beta)$ and $(T,R,c^\prime,s^\prime_x,s^\prime_\delta,s^\prime_\beta)$, where $c\neq c^\prime$ and both transcripts satisfy the verification equation. 
Set $\tilde{x}:=\frac{s_x-s^\prime_x}{c-c^\prime}$, $\tilde{y}:=\frac{(s_x-s^\prime_x)(s_\beta-s^\prime_\beta)-(s_\delta-s^\prime_\delta)(c-c^\prime)}{(c-c^\prime)^2}$, and $\tilde{\beta}:=\frac{s_\beta-s^\prime_\beta}{c-c^\prime}$. 
Then, $\frac{e(T,W)}{e(g_1,g_2)}=\frac{e(h,g_2)^{\tilde{\beta}\tilde{x}-\tilde{y}}e(h,W)^{\tilde{\beta}}}{e(T,g_2)^{\tilde{x}}}$ holds. Therefore, for $\tilde{A}=T/h^{\tilde{\beta}}$, $e(\tilde{A},g^{\tilde{x}}W)=e(g_1,g_2)e(h,g_2)^{-\tilde{y}}$ holds. That is, $(\tilde{x},\tilde{y},\tilde{A})$ can be extracted. 
This immediately leads to unforgeability. 
%Briefly, by the above extractor and the Forking Lemma~\cite{[PointchevalS00]}, the simulator rewinds the adversary, obtains two forged signatures, and can extract a SDH pair from forged signatures. 
We omit the formal proof since this is similar as that of the original Furukawa-Imai scheme.

\subsection{Prototype}

This section introduces a prototype that implements the proposed protocol and evaluates its performance to demonstrate the feasibility and practicality of the protocol.

\subsubsection{Implementation}

%The prototype consists of the User, Proxy, and SP modules.
We built the User and SP modules by using C language (GCC version 4.2.1). 
We also used the TEPLA library~\cite{TEPLA} for implementing the Boneh-Franklin IBE scheme and our open-free group signature scheme. 
This library supports optimal Ate pairings over Barreto-Naehrig (BN) elliptic curves~\cite{[BarretoN05]} with 254-bit prime order and the corresponding embedded degree is 12. 
This enables 128-bit security.
We used Simpleproxy~\cite{Simpleproxy} for the Proxy module.

Three types of communication sequences are implemented, i.e., User-GM, User-KGC, and User-Proxy-SP, and each of the sequence runs the modules defined in Section \ref{Sec:ProtocolConstruction}.
The User-GM sequence begins with the Join module, which communicates with the GM. The GM then computes the signing key $sk$, and returns it to the User.
The User-KGC sequence begins with the UserKeyGen module, which communicates with the KGC. The User sends $\mathtt{TempID}$ to the KGC, and the KGC then computes  the decryption key $dk_{\mathtt{TempID}}$, and returns it to the User.
The User-Proxy-SP sequence begins with the SendRequest module that sends a group signature and TempID.
Upon receiving them, the Proxy runs RelayRequest module that forwards them to the SP.
It then runs ValidityCheck module and SendContent module that returns an IBE ciphertext to the Proxy, which forwards that to the User.
The User then runs GetContent module that decrypts the IBE ciphertext by using the corresponding $dk_{\mathtt{TempID}}$. 

Note that the User-GM sequence needs to be run before User-Proxy-SP sequence starts. Likewise, the User-KGC and User-Proxy-SP sequences are run in parallel, though the User-KGC procedure needs to be completed before User-Proxy-SP procedure's ${\sf Get Content}$ module is run. 
%Proxy registers each pair $(\mathtt{TempID}, \mathtt{Adr_{src}})$ in a table. 

\begin{figure}[htb]
\centering
\includegraphics[scale=.31]{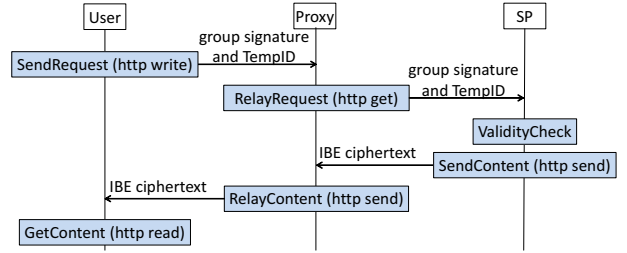}
\caption{Sequences for User-Proxy-SP}\label{seq123}
\end{figure}

\subsubsection{Performance Measurement}

An environment for performance evaluation of the proposed protocol was prepared.
We used an Apple MacBookPro 15inch mid 2010 (processor: 2.8GHz Intel Core i7, Memory: 8GB, 1067 MHz DDR3, Darwin Kernel Version 12.4.0), and prepared two VMs by using VMware Fusion 5.0.3. 
We assigned the roles of the User to the MacOS the roles of the Proxy and SP on the VMs. For the Proxy, the VM ran FreeBSD amd64 9.1-RELEASE with one processor and 256MB of memory, and for the SP, VM ran CentOS 5.9 x86\_64 with one processor and 512MB of memory.

Here, we show that our protocol is feasible by showing the running time of algorithms and total running time of one session are msec order.
First, we show the running time of one session (User$\rightarrow$Proxy$\rightarrow$SP$\rightarrow$Proxy$\rightarrow$User) in the following cases: (1) HTTP communications (i.e., without any cryptographic operations), (2) SSL communications, and (3) our protocol in Table~\ref{session-time}.
To measure the running time of the SSL communication, we use the s\_server/s\_time command of the OpenSSL library (ver. 1.0.1e)~\cite{openssl}.
We use DHE-RSA-AES128-SHA256 cipher suite with a 3072-bit size public key since this also supports 128-bit security, as in ours.

\begin{table}[h]
\centering
\caption{Running Time (one session)}\label{session-time}
\begin{tabular}{|c|c|c|} 
\hline
Scheme & Time(msec) & Cryptographic Operations\\ \hline\hline
None & 4.714  & - \\ \hline
SSL & 12.897  & Enc/Auth\\ \hline
Ours & 624.743 & Enc/Anon. Auth\\\hline
\end{tabular}
\end{table}

\noindent 
Table \ref{session-time} shows that the running time of our protocol is approximately 50-times slower than that of SSL communications. 
This inefficiency is due to the pairing computation which is not required in usual public key encryption, digital signature, and authentication (these are used in SSL). 
Nevertheless, 
it is particularly worth noting that our running time still fits inside millisecond order, 
and our protocol even supports secure, anonymous, and authenticated communication, simultaneously. 

For reference, Table \ref{exp} gives the running time of each algorithm.
Note that the ${\sf GM.Setup}$, ${\sf KGC.Setup}$, and ${\sf Join}$ algorithms can be run offline, and the ${\sf UserKeyGen}$ algorithm can be run separately against the session.
Moreover, we ignore the ${\sf RelayRequest}$ and ${\sf RelayContent}$ algorithms since these (run by Proxy) just relay the communication, and are run independently against any cryptographic operations.

\begin{table}[h]
\begin{center}
\caption{Running Time (algorithms)}\label{exp}
\begin{tabular}{|c|c|c|} 
\hline
Algorithm & Time(msec) & Entity \\ \hline\hline
${\sf GM.Setup}$ & 105.712 &  GM\\ \hline
${\sf KGC.Setup}$ & 102.883 &  KGC\\ \hline
${\sf Join}$ & 109.036 & User-GM\\\hline
${\sf UserKeyGen}$ & 102.958  &  User-KGC\\ \hline
${\sf SendRequest}$ & 125.069  &  User\\ \hline
%${\sf RelayRequest}$  &  &  Proxy\\ \hline
${\sf ValidityCheck}$ &199.247  &  SP\\ \hline
${\sf SendContent}$  & 198.636 &  SP\\ \hline
%${\sf RelayContent}$  &  &  Proxy\\ \hline
${\sf GetContent}$  & 101.158 &  User\\ \hline
\end{tabular}
\end{center}
\end{table}

\noindent 
The dominant factor for User is the ${\sf SendRequest}$ algorithm which computes a group signature. 
Note that this procedure can also be run offline by assuming that the User chooses $\mathtt{TempID}$ and computes a group signature before starting a session. 
Then, the total running time of one session becomes less than 500 msec.

\section{Conclusion}

The proposed protocol along with our group signature enables secure anonymous authentication.
It is feasible and practical in terms of transaction time.
Although this paper proved its concept, we need to consider practical deployment over the Internet.
Indeed, the protocol requires a proxy that assists secure, anonymous, and authenticated communication.
Though various types of proxy may exist, including Tor routers, we need to consider and verify the adaptability of our protocol to the current infrastructure.
On the other hand, assorted anonymous communication systems~\cite{TorProject,oakland2013-parrot,[MoghaddamLDG12]} have risks of being used by malicious parties.
One reason for that is their inability to authenticate users. Properly applying our protocol may enable these systems to be properly used.
Through this work, we wish to facilitate secure, anonymous, and authenticated communication over the Internet.

\noindent$\mathbf{Acknowledgment}$: 
The authors would like to thank Dr. Goichiro Hanaoka and Dr. Miyako Ohkubo for their invaluable comments.

\end{document}